\newcommand{\seq}[1]{\left\langle #1\right\rangle}
\newcommand{\floor}[1]{\left\lfloor #1\right\rfloor}
\newcommand{\abs}[1]{\left| #1\right|}
\newcommand{\set}[1]{\left\{ #1\right\}}
\newcommand{\Def}{:=}
\newcommand{\wrt}{w.r.t. }
\newcommand{\realrange}[2]{\left[#1, #2\right]}
\newcommand{\unitrange}[2]{\realrange{0}{1}}
\newcommand{\Oh}[1]{\mathrm{O}\!\left( #1\right)}
\newcommand{\Th}[1]{\Theta\!\left( #1\right)}
\newcommand{\llabel}[1]{\label{\labelprefix:#1}}
\newcommand{\labelprefix}{} 
\newcommand{\discussionsize}{\small}
\newcommand{\notiz}[1]{}
\newcommand{\frage}[1]{}
\newenvironment{code}{\noindent
\begin{tabbing}%
\hspace{2em}\=\hspace{2em}\=\hspace{2em}\=\hspace{2em}\=\hspace{2em}\=%
\hspace{2em}\=\hspace{2em}\=\hspace{2em}\=\hspace{2em}\=\hspace{2em}\=%
\kill}{\end{tabbing}}
\newcommand{\labelcommand}{}
\newcommand{\captiontext}{}
\newsavebox{\codeparam}
\newcounter{lineNumber}
\newenvironment{disscodepos}[3]{%
\renewcommand{\labelcommand}{#2}%
\renewcommand{\captiontext}{#3}%
\sbox{\codeparam}{\parbox{\textwidth}{#3}}%
\begin{figure}[#1]\begin{center}\begin{code}\setcounter{lineNumber}{1}}{%
\end{code}\end{center}\caption{\llabel{\labelcommand}\captiontext}\end{figure}}
\newcommand{\Function} {{\bf Function\ }}
\newcommand{\While}    {{\bf while\ }}
\newcommand{\Do}       {{\bf do\ }}
\newcommand{\ForEach}      {{\bf foreach\ }}
\newcommand{\If}       {{\bf if\ }}
\newcommand{\Then}     {{\bf then\ }}
\newcommand{\Else}     {{\bf else\ }}
\newcommand{\Return}   {{\bf return\ }}
\newdimen\endofsize\endofsize=0.5em
\def\endofbeweis{~\quad\hglue\hsize minus\hsize
                 \hbox{\vrule height \endofsize width
\endofsize}\par}
\newcommand{\ignore}[1]{}
\renewcommand{\frage}[1]{[{\sf#1}]\marginpar[\hfill$\Longrightarrow$]{
$\Longleftarrow$}}
\newcommand{\fn}[1]{\mathit{fn}(#1)}
\newtheorem{example}{Example}
\newtheorem{lemma}{Lemma}
\newtheorem{theorem}{Theorem}
\newcommand{\qed}{\hfill$\square$}
\newenvironment{proof}{\textit{Proof.}}{\qed}
\title{Contraction of Timetable Networks with Realistic Transfers}
\author{Robert Geisberger\\
Universität Karlsruhe (TH), 76128 Karlsruhe, Germany, {\tt geisberger@ira.uka.de}
}
\begin{document}

\maketitle

\begin{abstract}
We successfully contract timetable networks with realistic transfer times.
Contraction gradually removes nodes from the graph and adds shortcuts to preserve shortest paths.
This reduces query times to 1\,ms with preprocessing times around 6 minutes on all tested instances.
We achieve this by an improved contraction algorithm and by using a station graph model.
Every node in our graph has a one-to-one correspondence to a station and every edge has an assigned collection of connections.
Our graph model does not need parallel edges.
The query algorithm does not compute a single earliest arrival time at a station but a set of arriving connections that allow best transfer opportunities.

\end{abstract}

\section{Introduction}
Compared to road networks, query algorithms for timetable information in public transportation systems are still slow.
As multi-modal routing becomes more and more important, it is necessary to develop speedup techniques that work well.
On the one hand, previous research focused on modelling more and more features instead of fast algorithms.
On the other hand, current speedup techniques for time-dependent routing in road networks are very fast, e.g. \cite{d-tdsr-09,bdsv-tdch-09}, but they have some problems to process transportation networks.
So more research on fast timetable routing is necessary.
We successfully adapt a fast speedup technique for road networks.
The positive outcome is mainly due to the station graph model that looks much more natural and lacks a lot of problems that other models have.

\subsection*{Related Work}\label{s:related}

Public transportation networks have always been time-dependent, i.e. travel times depend on the availability of trains, buses or other vehicles.
So they are naturally harder than road networks, where simple models can be independent of the travel time and still achieve good results.
There are two intensively investigated approaches for modeling timetable information: the \emph{time-expanded} \cite{mw-pspof-01,mn-eatm-98,sww-daola-00,swz-umlgt-02}, and the so called \emph{time-dependent} approach \cite{bj-tnmaf-04,n-t-95,or-spmda-90,or-mwptd-91}.
Note that the time-dependent approach is a special approach to model the time-dependent information and is no umbrella term for all these approaches.
Both approaches answer queries by applying some shortest-path algorithm to a suitably constructed graph.
In the time-expanded approach, each node corresponds to a specific time event (departure or arrival), and each edge has a constant travel time.
In the time-dependent approach each node corresponds to a station, and the costs on an edge are assigned depending on the time in which the particular edge will be used by the shortest-path algorithm.

To model more realistic transfers in the time-dependent approach, \cite{bj-tnmaf-04} propose to model each platform as a separate station and add walking links between them.
In \cite{pswz-emtip-07}, a similar extension for constant and variable transfer times is proposed and described in more detail.
Basically, a station is expanded to a \emph{train-route graph} where no one-to-one correspondence between nodes and stations exists anymore.
A \emph{train route} is the maximal subset of trains that follow the exact same route, at possibly different times and do not overtake each other.
Each train route has its own node at each station and they are interconnected within a station with the given transfer times.
This results in a significant blowup in the number of nodes and creates a lot of redundancy information that is collected during a query.
Recently, another research group \cite{bm-somcs-09,bdgm-atdmc-09} independently proposed a model that is similar to ours.
They call it the \emph{station graph} model and mainly use it to compute all Pareto-optimal paths in a fully realistic scenario.
For unification, we will give our model the same name although there are some important differences in the details.
The most significant differences are that (1) they require parallel edges, one for each train route and (2) their query algorithm computes a label for each edge instead of each node.
Their improvement over the time-dependent model was mainly that they compare all labels at a station and remove dominated ones.

Speedup techniques are very successful when it comes to routing in time-dependent road networks, see \cite{dw-tdrp-09} for an overview.
However, there is only little work on speedup techniques for timetable networks.
Time-dependent SHARC \cite{dw-tdrp-09} uses the same scenario as we do and achieves query times of 2.4\,ms but with preprocessing times of more than 6 hours (we scaled by a factor of 0.5 compared to \cite{dw-tdrp-09} based on plain Dijkstra performance of our and their hardware).
Based on the station graph model, \cite{bdgm-atdmc-09} also applied some speedup techniques, namely arc flags that are valid on time periods and route contraction.
They could not use node contraction because they were too many parallel edges between stations.
Their preprocessing time (not scaled due to lack of comparable figures) is over 33 CPU hours resulting in a full day profile query time of more than 1 second (speedup factor 5.2).

We will show that a modified version of contraction hierarchies (CH) \cite{bdsv-tdch-09,gssd-chfsh-08}, a very successful speedup technique for road networks, will work in our scenario with realistic transfers.
It is solely based on node contraction: removing ``unimportant'' nodes and adding shortcuts to preserve shortest path distances.
A bidirectional query can then find shortest paths looking only at a few hundred nodes.

\section{Formal Description}\label{s:contraction}

We propose a model that is similar to the realistic time-dependent model introduced by \cite{pswz-emtip-07} but we keep a one-to-one mapping between nodes in the graph and real stations.

A \emph{timetable} consists of data concerning: \emph{stations} (or bus stops, ports, etc), \emph{trains} (or buses, ferries, etc), connecting stations, \emph{departure} and \emph{arrival times} of trains at stations, and \emph{traffic days}.
More formally, we are given a set of stations $\mathcal{B}$, a set of \emph{stop events} $\mathcal{Z_S}$ per station $S\in\mathcal{B}$, and a set of \emph{elementary connections} $\mathcal{C}$, whose elements $c$ are 6-tuples of the form $c=(Z_1, Z_2, S_1, S_2, t_d, t_a)$. Such a tuple (elementary connection) is interpreted as train that leaves station $S_1$ at time $t_d$ after stop $Z_1$ and the \emph{immediately next} stop is $Z_2$ at station $S_2$ at time $t_a$.
If $x$ denotes a tuple's field, then the notation of $x(c)$ specifies the value of x in the elementary connection $c$.
Two consecutive elementary connections, $c_1$ followed by $c_2$, where no transfer is required, share a stop event, i.e. $Z_2(c_1) = Z_1(c_2)$.
A stop event can not only be a consecutive arrival and departure of a train, but also the begin (no arrival) or the end (no departure) of a train.

The \emph{departure} and \emph{arrival times} $t_d(c)$ and $t_a(c)$ of an elementary connection $c \in \mathcal{C}$ within a day are integers in the interval $[0, 1439]$ representing time in minutes after midnight.
Given two time values $t$ and $t'$, $t \le t'$, the \emph{cycle difference}($t,t'$) is the smallest nonnegative integer $\ell$ such that $\ell \equiv t' - t$ (mod 1440).
The \emph{length}\ of an elementary connection $c$, denoted by \emph{length}($c$), is \emph{cycle difference}($t_d(c), t_a(c)$).
A timetable is valid for a number of $N$ \emph{traffic days}, and every train is assigned a bit-field of $N$ bits determining of which traffic days the train operates (for overnight trains the departure of the first elementary connection counts).
We will generally assume that trains operate daily.
At a station $S\in \mathcal{B}$, it is possible to \emph{transfer} from one train to another.
Such a transfer is only possible if the time between the arrival and the departure at the station $S$ is larger than or equal to a given, station-specific, \emph{minimum transfer time}, denoted by \emph{transfer}($S$).

\begin{example}
\label{example:elementary_connections}
\{(1, 1, $A$, $B$, 23:05, 0:55), (1, 1, $B$, $C$, 1:02, 2:57), (1, 1, $C$, $D$, 3:00,4:20)\} describe elementary connections of a train from station $A$ via stations $B$, $C$ to station $D$ as shown in Figure~\ref{fig:example1}.
E.g. the train departs at station $A$ at 23:05 (hh:mm) and arrives at station $B$ at 0:55 at the next day.
The length of this elementary connection is 1:50 = 110 minutes.
\{(2, 1, $C$, $E$, 3:00, 4:00), (3, 2, $C$, $E$, 4:00, 5:00)\} describe elementary connections of two trains from station $C$ to $E$, the first train departs at 3:00 and arrives at 4:00, the second train one hour later.

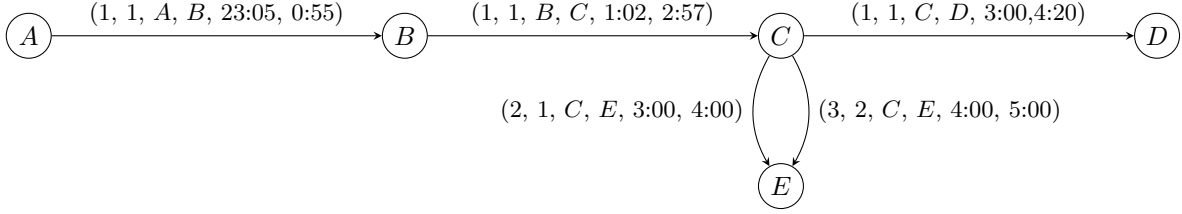
\begin{figure}[htb]
\centering
\tikzstyle{vertex}=[circle,minimum size=17pt,inner sep=0pt,draw=black]
\tikzstyle{weight} = [font=\small]
 
\begin{tikzpicture}[scale=1,auto,swap,>=stealth,->]
    \foreach \pos/\name in {{(0,2)/A}, {(5,2)/B}, {(10,2)/C}, {(15,2)/D}, {(10,0)/E}} \node[vertex] (\name) at \pos {$\name$};
    \foreach \source/ \dest /\weight in {A/B/{(1, 1, $A$, $B$, 23:05, 0:55)}, B/C/{(1, 1, $B$, $C$, 1:02, 2:57)}, C/D/{(1, 1, $C$, $D$, 3:00,4:20)}} \path (\source) edge node[weight,above] {\weight} (\dest);
    \path[bend right] (C) edge node[weight,left] {(2, 1, $C$, $E$, 3:00, 4:00)} (E);
    \path[bend left] (C) edge node[weight,right] {(3, 2, $C$, $E$, 4:00, 5:00)} (E);
\end{tikzpicture}
\caption{Every node is a station and every edge an elementary connection.}
\label{fig:example1}
\end{figure}
\end{example}

Let $P = (c_1,\dots,c_k)$ be a sequence of elementary connections together with departure times $dep_i(P)$ and arrival times $arr_i(P)$ for each elementary connection $c_i$, $1 \le i \le k$.
We assume that the times $dep_i(P)$ and $arr_i(P)$ include data regarding also the departure/arrival day by counting time in minutes from the first day of the timetable.
Such a time $t$ is of the form $t = a 1440 + b$, where $a \in [0,N-1]$ and $b\in[0,1439]$.
Hence, the actual time within a day is $t$ (mod 1440) and the actual day is $\floor{t/1440}$.
Such a sequence $P$ is called a \emph{consistent connection} from station $A=S_1(c_1)$ to station $B = S_2(c_k)$ if it fulfills some consistency conditions:
(a) the departure station of $c_{i+1}$ is the arrival station of $c_i$;
(b) the time values $dep_i(P)$ and $arr_i(P)$ correspond to the time values $t_d$ and $t_a$, resp., of the elementary connections (modulo 1440) and respect the transfer times at stations.
More formally, $P$ is a \emph{consistent connection} if the following conditions are satisfied:

\begin{tabular}{rcl}
$c_i$ && is valid on day $\floor{dep_i(P)/1440}$\\
$S_2(c_i)$ & $=$ & $S_1(c_{i+1})$\\
$dep_i(P)$ & $\equiv$ & $t_d(c_i)$ (mod 1440)\\
$arr_i(P)$ & $=$ & $\mathit{length}(c_i) + dep_i(P)$\\
$dep_{i+1}(P) - arr_i(P)$ & $\ge$ & $\begin{cases}0 & \text{if }Z_1(c_{i+1})=Z_2(c_i)\\\mathit{transfer}(S_2(c_i))&\text{otherwise}\end{cases}$
\end{tabular}

\begin{example}

\begin{tabular}{c|c|c|c}
$c_i$ & $(Z_1,Z_2,S_1,S_2,t_d,t_a)$ & $dep_i$ & $arr_i$\\
\hline
$c_1$ & (1, 1, $A$, $B$, 23:05, 0:55) & 23:05 & 24:55 \\
$c_2$ & (1, 1, $B$, $C$,  1:02, 2:57) & 25:02 & 26:57 \\
$c_3$ & (3, 2, $C$, $E$,  4:00, 5:00) & 28:00 & 29:00
\end{tabular}

$P=(c_1,c_2,c_3)$ is a consistent connection with one transfer.
The elementary connections are from Example~\ref{example:elementary_connections}.
Assume a transfer time at station $C$ of $\mathit{transfer}(C) = 5$ minutes.
It would not be consistent to replace $c_3$ with the train that arrives at $arr_3(P)=$ 28:00 since there are only 3 $< 5 = \mathit{transfer}(C)$ minutes between the arrival and the departure at station $C$.
\end{example}

\subsection{Time Query}

A time query $(A,B,t_0)$ consists of a departure station $A$, an arrival station $B$ and a departure time $t_0$ (including the departure day).
Connections are \emph{valid} if they depart not before the given departure time $t_0$.
A time query solves the earliest arrival problem (EAP) by minimizing the difference between the arrival time and the given departure time.

\subsection{Profile Query}

A profile query $(A,B)$ consists of a departure station $A$ and an arrival station $B$.
It computes a dominant set of all consistent connections between $A$ and $B$.

\section{Station Graph Model}\label{s:ordering}

We introduce a model that represents a timetable as a digraph with exactly one node per station.
For a simplified model without transfer times, this is like the time-dependent model.
New is that even with positive transfer times, we keep one node per station and do not have parallel edges.
We got the idea while trying to apply node contraction to time-dependent networks.
Our observation was, that too many shortcuts where added, and there were a lot of edges between the nodes of the same station pair.
In a first step, we tried to reduce the number of nodes by merging route nodes.
We can do this when it is never necessary to transfer between the two routes at this station.
However, we were not successful with this step and eventually came up with the station graph model.

The attribute of an edge $e=(A,B)$ is a set of consistent connections $\fn{e}$ that depart at $A$ and arrive at $B$.
In this section, all connections are consistent, so we omit to mention it again.
Previous models required that all connections of a single edge fulfill the FIFO-property, i.e. they do not overtake each other.
In contrast, we do not require this property and we will see that even for time queries, we can have more than one dominant arrival event per station.

To link two edges $e_1=(A,B)$ and $e_2=(B,C)$ to an edge $e_3=(A,C)$ we need to pairwise link the connections in $\fn{e_1}$ and $\fn{e_2}$.
We need to drop the new connections that are not consistent.
Also some other new connections might not be necessary if they can always be replaced by other new connections without worsening the objective value.
We call such replaceable connections \textit{dominated} in this set of connections and all other connections \textit{dominant}.
Note that the exact definition of domination depends on the objective function.

\subsection{Time Query}
In this section we describe our baseline algorithm to solve the EAP.
We use a Dijkstra-like algorithm on our station graph that stores labels with each station and incrementally corrects them.
A label is a \emph{connection} $P$ stored as a tuple $(Z_1,Z_2,dep,arr)$ where $Z_1$ is the stop event for departure, $Z_2$ is the stop event for arrival and $dep$ and $arr$ are the departure/arrival time including days.
The source station $S_1(P)$ is implicitely given by the query and the target station $S_2(P)$ is implicitly given by the station that stores this connection.

Not all valid connections are relevant for a time query.
We do not need to store connections that can be dominated (replaced) by another stored connection.
Before we can specify when two connections dominate each other, we need some more definitions.
Let $P$ be a connection.
Define $parr(P)$ as the \underline{p}revious \underline{arr}ival time of the stop event $Z_1(P)$ at station $S_1(P)$ or $\perp$ when the train begins there.
And define $ndep(P)$ as the \underline{n}ext \underline{dep}arture time of the stop event $Z_2(P)$ at station $S_2(P)$ or $\perp$ when the train ends there.
When $parr(P) \neq \perp$ then we call $res_d(P) \Def dep(P) - parr(P)$ the \emph{residence time at departure}.
Consequently, when $ndep(P) \neq \perp$ then we call $res_a(P) \Def ndep(P)- arr(P)$ the \emph{residence time at arrival}.
We call it a \emph{critical departure} when $parr(P) \neq \perp$ and $res_d(P) < \mathit{transfer}(S_1(P))$ and a \emph{critical arrival} when $ndep(P) \neq \perp$ and $res_a(P) < \mathit{transfer}(S_2(P))$.

A connection $P$ dominates a connection $Q$ iff all of the following conditions are fulfilled:
\begin{enumerate}
 \item[(1)] $S_1(P) = S_1(Q)$ and $S_2(P) = S_2(Q)$
 \item[(2)] $dep(Q) \le dep(P)$ and $arr(P) \le arr(Q)$
 \item[(3)] $Z_1(Q) = Z_1(P)$ or $Q$ is not a critical departure or $dep(P) - parr(Q) \ge \mathit{transfer}(S_1(P))$
 \item[(4)] $Z_2(Q) = Z_2(P)$ or $Q$ is not a critical arrival or $ndep(Q) - arr(P) \ge \mathit{transfer}(S_2(P))$
\end{enumerate}

We could also see the query algorithm as a multi-criteria shortest path problem with relaxed Pareto dominance \cite{ms-faatc-07}.
The length of a connection is the cost criterion (2), but it is relaxed by the additional train and transfer information (3),(4).

Given an connection $R=(c_1,\dots,c_k)$, we call a connection $(c_1,\dots,c_{i})$ with $1\le i \le k$ a \emph{prefix} of $R$, a connection $(c_j,\dots,c_k)$ with $1 \le j \le k$ a \emph{suffix} of $R$ and a connection $(c_i,\dots,c_j)$ with $1\le i \le j \le k$ a \emph{subconnection} of $R$.

Lemma~\ref{lemma:dominate_replace} formalizes the intuition of the domination relation.
\begin{lemma}
\label{lemma:dominate_replace}
A consistent connection $P$ dominates a consistent connection $Q$ iff for all consistent connections $R$ with subconnection $Q$, we can replace $Q$ by $P$ to get a consistent connection $R'$ with $dep(R) \le dep(R') \le arr(R') \le arr(R)$.
\end{lemma}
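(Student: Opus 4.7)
The plan is to prove the forward and converse implications separately. Write $R=(c_1,\dots,c_k)$ with subconnection $Q=(c_i,\dots,c_j)$ and form $R'$ by substituting the elementary connections of $P$ for $c_i,\dots,c_j$. Condition~(1) ensures the substitution is well-formed at the endpoints, and condition~(2), together with the fact that the unaffected portions of $R'$ keep their times, immediately gives $dep(R)\le dep(R')\le arr(R')\le arr(R)$.

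The crux of the forward direction is consistency of $R'$. Since the interior of $P$ is consistent by assumption, only the boundary transitions $c_{i-1}\to\text{first}(P)$ (if $i>1$) and $\text{last}(P)\to c_{j+1}$ (if $j<k$) can fail. I would do a case split on whether $R$ and $R'$ share a stop event at the boundary. If $R$ already required a true transfer (different stop events) at that boundary, then the gap in $R'$ is at least as large since $dep(P)\ge dep(Q)$ and $arr(P)\le arr(Q)$, and condition~(2) alone preserves validity. The only delicate case is when $R$ used a shared stop event at $Z_1(Q)$ but the substitution breaks it because $Z_1(P)\ne Z_1(Q)$; then $arr(c_{i-1})=parr(Q)$ and I must establish $dep(P)-parr(Q)\ge \mathit{transfer}(S_1(P))$. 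This is exactly what the two nontrivial clauses of condition~(3) deliver: either directly, or via $res_d(Q)\ge \mathit{transfer}(S_1(Q))$ combined with $dep(P)\ge dep(Q)$. The arrival boundary is handled symmetrically through condition~(4).

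For the converse I produce specific witnesses $R$. Taking $R=Q$ gives $R'=P$; the assumed time bound is exactly~(2), and the requirement that $R'$ have the same source and target as $R$ yields~(1). For condition~(3), suppose all three of its disjuncts fail: $Z_1(P)\ne Z_1(Q)$, $Q$ is a critical departure (so $parr(Q)\ne\perp$), and $dep(P)-parr(Q)<\mathit{transfer}(S_1(P))$. Prepend to $Q$ the elementary connection $c_0$ with $Z_2(c_0)=Z_1(Q)$, i.e.\ the incoming leg of the same train at stop event $Z_1(Q)$; this exists because $parr(Q)\ne\perp$, and choosing its day to match $Q$ gives it arrival time $parr(Q)$. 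The resulting $R=(c_0,c_i,\dots,c_j)$ is consistent because $c_0$ and $c_i$ share stop event $Z_1(Q)$ and thus require no transfer. In $R'$, however, $c_0\to\text{first}(P)$ crosses different stop events, mandating a transfer, so consistency of $R'$ would force $dep(P)-parr(Q)\ge \mathit{transfer}(S_1(P))$, contradicting the assumption. Condition~(4) follows by the mirror construction, appending an elementary connection whose departure shares stop event $Z_2(Q)$.

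The main obstacle is the bookkeeping around the corner cases where $parr(Q)=\perp$ or $ndep(Q)=\perp$ (the train begins or ends at an endpoint of $Q$, so the critical-departure/arrival notion is vacuous and the delicate subcase of the forward direction cannot arise) and where $Q$ is a prefix, suffix, or all of $R$ so that only one or neither boundary transition exists. A secondary subtlety is confirming the existence of the prepended/appended witness on the correct day, which follows from interpreting $parr$ and $ndep$ as attributes of stop events that have a timetabled predecessor or successor in the same train and choosing its day to match that of $Q$.
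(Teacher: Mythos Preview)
Your proposal is correct and follows essentially the same approach as the paper's proof: both directions are argued separately, the forward direction uses conditions~(1)--(4) to verify consistency at the two boundary transitions when $Q$ is replaced inside $R$, and the converse direction instantiates $R=Q$ for (1)--(2) and extends $Q$ by the elementary connection witnessing the critical departure (resp.\ arrival) for (3) (resp.\ (4)). Your write-up is considerably more detailed than the paper's---in particular your explicit case split at the departure boundary (true transfer in $R$ versus shared stop event) and your handling of the $parr(Q)=\perp$ corner cases make precise what the paper leaves implicit in a two-sentence sketch.
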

\begin{proof}
$\Rightarrow$ $P$ dominates $Q$:
Condition (1) locates $P$ and $Q$ at the same stations.
Condition (2) ensures that we can replace $R=Q$ by $P$ directly or when transfer times are irrelevant.
The last two conditions (3) and (4) ensure that we can replace $Q$ by $P$ even when $Q$ is just a part of a bigger connection and we need to consider transfer times.
The prefix of this bigger connection \wrt $Q$ may arrive in $S_1(Q)$ with stop event $Z_1(Q)$ and condition (3) ensures that it is consistent to transfer to $Z_1(P)$.
Consequently the suffix of this bigger connection \wrt $Q$ may depart in $S_2(Q)$ with stop event $Z_2(Q)$ and condition (4) ensures that it is consistent to transfer to $Z_2(P)$.

$\Leftarrow$ $\forall R$ we can replace $Q$ by $P$:
Condition (1) holds trivially.
We get condition (2) with $R=Q$.
Let $Q=(c_1,\dots,c_k)$ be a critical departure and $Z_1(Q) \neq Z_1(P)$.
Then we get condition (3) when we choose $R$ as the extension of $Q$ that is given by the critical departure.
Let $Q=(c_1,\dots,c_k)$ be a critical arrival and $Z_2(Q) \neq Z_2(P)$.
Then we get condition (4) when we choose $R$ as the extension of $Q$ that is given by the critical arrival.
\end{proof}

With the given model of a connection, we can model waiting times implicitly as the time between the arrival with one train and the departure of the next train.
However initial waiting as it can appear for a time query $(A,B,t_0)$ is currently not possible.
So we introduce an \emph{arrival connection} $P$ that is represented by a tuple $(arr, Z_2)$ and defines a connection from $A$ to $S_2(P)$.
We perform the query as a label correcting algorithm and store a dominant set of arrival connections with a station so that $S_2(P)$ is implicitly given.
An arrival connection is called \emph{consistent} \wrt the query if it is a consistent connection $Q$ with $dep_1(Q) \ge t_0$ and $S_1(Q) = A$.
All arrival connections in this section are consistent \wrt the query so we do not mention it again.
Linking of a set of arrival connections at station $C$ with an edge $(C,D)$ will result in a set of arrival connections at station $D$.

An arrival connection $P$ dominates an arrival connection $Q$ iff all of the following conditions are fulfilled:
\begin{enumerate}
 \item[(1)] $S_2(P) = S_2(Q)$
 \item[(2)] $arr(P) \le arr(Q)$
 \item[(3)] $Z_2(Q) = Z_2(P)$ or $Q$ is not a critical arrival or $ndep(Q) - arr(P) \ge \mathit{transfer}(S_2(P))$
\end{enumerate}

A result of Lemma~\ref{lemma:dominate_replace} is Lemma~\ref{lemma:dominate_replace_eap}.
\begin{lemma}
\label{lemma:dominate_replace_eap}
Let $(A,B,t_0)$ be a time query.
A consistent arrival connection $P$ dominates a consistent arrival connection $Q$ iff for all consistent arrival connections $R$ with prefix $Q$, we can replace $Q$ by $P$ to get a consistent arrival connection $R'$ with $arr(R') \le arr(R)$.
\end{lemma}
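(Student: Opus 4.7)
The plan is to mirror the proof of Lemma~\ref{lemma:dominate_replace} in the simpler arrival-connection setting, where only the arrival-side endpoint $(arr,Z_2)$ is tracked and initial waiting at the source $A$ is free. I treat each arrival connection via an underlying consistent realization $\tilde P$ from $A$ to $S_2(P)$ with $dep_1(\tilde P)\ge t_0$, and read ``$R$ has prefix $Q$'' as saying some realization $\tilde Q$ of $Q$ is a prefix of some realization $\tilde R$ of $R$. Both directions then reduce to a careful treatment of the single boundary created or modified by the replacement at $S_2(P)=S_2(Q)$.

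For the forward direction, write $\tilde R=\tilde Q\circ S$ with $S=(c_{j+1},\ldots,c_k)$ and set $\tilde R'\Def \tilde P\circ S$, keeping the original day offsets of $S$. The query-departure condition $dep_1(\tilde R')\ge t_0$ is inherited from $\tilde P$, and all consistency obligations inside $\tilde P$ and inside $S$ are unchanged. The only new obligation is the boundary between $\tilde P$ and $c_{j+1}$, which I would dispatch by splitting into four subcases according to whether $Z_2(P)=Z_2(Q)$ and whether $Z_1(c_{j+1})=Z_2(P)$. When $Z_2(P)=Z_2(Q)$, the original boundary inequality shifts in our favour by $arr(Q)-arr(P)\ge 0$, which is condition~(2). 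When $Z_2(P)\neq Z_2(Q)$ and a fresh transfer is needed, condition~(3) supplies it: either $Q$ is non-critical at arrival, so $res_a(Q)\ge \mathit{transfer}(S_2(Q))$ lifts to the required bound via~(2), or the direct inequality $ndep(Q)-arr(P)\ge \mathit{transfer}(S_2(P))$ applies, which coincides with $dep_{j+1}(R)-arr(P)\ge \mathit{transfer}(S_2(P))$ once one observes that $Z_1(c_{j+1})=Z_2(Q)$ forces $dep_{j+1}(R)=ndep(Q)$. Finally $arr(\tilde R')=arr(\tilde R)$ when $S$ is nonempty, and $arr(\tilde R')=arr(P)\le arr(Q)=arr(\tilde R)$ otherwise.

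For the reverse direction, condition~(1) is implicit in the replacement being meaningful, and condition~(2) falls out of the choice $R=Q$, whence $R'=P$. For condition~(3), I would assume $Z_2(P)\neq Z_2(Q)$ and that $Q$ is a critical arrival; then $ndep(Q)\neq\perp$, so the train of $Q$ continues via an elementary connection $c$ with $Z_1(c)=Z_2(Q)$. Taking $\tilde R\Def \tilde Q\circ(c)$ is consistent, since the shared stop event removes the transfer requirement. The hypothesis produces a consistent $R'$ with $arr(R')\le arr(R)$; scheduling $c$ on a later day would add at least $1440$ minutes to $arr(R')$, so $c$ must keep its original day and $dep$ of $c$ in $R'$ equals $ndep(Q)$. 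Because $Z_2(P)\neq Z_1(c)$, the new boundary requires a transfer, yielding $ndep(Q)-arr(P)\ge \mathit{transfer}(S_2(P))$.

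The main obstacle is the subcase of the forward direction in which $Z_2(P)\neq Z_2(Q)$ but the original suffix attached to $Q$ required no transfer at $S_2(Q)$ (because $Z_1(c_{j+1})=Z_2(Q)$): here the critical-arrival clause of condition~(3) must exactly cover the newly introduced transfer cost, and one has to carefully identify $dep_{j+1}(R)$ with $ndep(Q)$ before the bound can be applied.
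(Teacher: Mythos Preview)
Your argument is correct and is exactly what the paper intends: the paper states Lemma~\ref{lemma:dominate_replace_eap} as a direct consequence of Lemma~\ref{lemma:dominate_replace} without giving a separate proof, and your write-up is precisely the Lemma~\ref{lemma:dominate_replace} argument restricted to the arrival side (the departure-side clauses disappear because the source is fixed at $A$ and initial waiting is free). The one imprecision---asserting that $Z_1(c_{j+1})=Z_2(Q)$ forces $dep_{j+1}(R)=ndep(Q)$ rather than merely $dep_{j+1}(R)\ge ndep(Q)$ with the slack a multiple of $1440$---is harmless, since any extra $1440$ trivially covers the transfer bound.
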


To solve the EAP, we manage a set of dominant arrival connections $ac(S)$ for each station $S$.
The initialization of $ac(A)$ at the departure station $A$ is a special case since we have no real connection to station $A$.
That is why we introduce a special stop event $\forall$ and we start with the set $\set{(t_0, \forall)}$ at station $A$.
Our query algorithm then knows that we are able to board all trains that depart at $t_0$ or later.
We perform a label correcting query that uses the minimum arrival time of the (new) connections as key of a priority queue.
This algorithm needs two elementary operations: (1) \emph{link}: We need to traverse an edge $e=(S,T)$ by linking a given set of arrival connections $ac(S)$ with the connections $\fn{e}$ to get a new set of arrival connections to station $T$.
(2) \emph{minimum}: We need to combine the already existing arrival connections at $T$ with the new ones to a dominant set.
These two operations also dominate the runtime of the query algorithm and we describe in Appendix~\ref{sec:implementation} efficient implementations.
The most important part is a suitable order of the connections, primarily ordered by departure time.
The minimum operation is then mainly a linear merge operation and the link operation uses precomputed intervals to look only at a small relevant subset of $\fn{e}$.
We gain additional speedup by combining the link and minimum operation.

We found a solution to the EAP once the key of the priority queue is $\ge$ the minimum arrival time at $B$.

\begin{code}
\Function timeQuery($A$, $B$, $t_0$)\+\\
  \ForEach $S \in \mathcal{B} \setminus A$ \Do $ac(S) \Def \emptyset$\\
  $ac(A) \Def \set{(t_0, \forall)}$\\
  $Q.\text{insert}(t_0, A)$\\
  \While $Q \neq \emptyset$\+\\
    $(t, S) \Def Q$.deleteMin()\\
    \If $S=B$ \Then \Return $t$\\
    \ForEach edge $e \Def (S,T)$\+\\
      $N \Def \text{minimum}\set{ac(T),e.\text{link}(ac(S))}$\\
      \If $N \neq ac(T)$\+\\
        $ac(T) \Def N$\\
        $k \Def \min_{P \in N}arr(P)$\\
        \If $T$ in $Q$ \Then $Q.\text{decreaseKey}(k, T)$ \Else $Q.\text{insert}(k, T)$\-\-\-\\
  \Return $\perp$
\end{code}

\begin{theorem}
The time query in the station graph model solves the EAP.
\end{theorem}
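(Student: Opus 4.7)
The plan is a Dijkstra-style inductive argument, adapted to labels that are \emph{sets} of arrival connections rather than scalars. For the query $(A,B,t_0)$, let $d^*(S)$ denote the minimum $arr(P)$ over all arrival connections $P$ that are consistent w.r.t.\ the query and have $S_2(P)=S$, with $d^*(S)=\infty$ if none exist. The goal is to show that whenever a station $S$ is extracted from $Q$ with key $t$, one has $t=d^*(S)$; specialised to the return point $S=B$ this yields the claim. The maintained invariants are (a) every $P\in ac(S)$ is a consistent arrival connection with $S_2(P)=S$, and (b) the priority-queue key of $S$ equals $\min_{P\in ac(S)} arr(P)$.

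Invariant (a) gives the easy direction $t\ge d^*(S)$ at any extraction with finite key. For the reverse inequality $t\le d^*(S)$ I would run a standard exchange argument: fix an optimal consistent arrival connection $R^*=(c_1,\dots,c_m)$ realising $d^*(S)$, walk along $R^*$, and locate the first station $T_{j+1}:=S_2(c_{j+1})$ that has not yet been extracted (with $T_0=A$ and base case $d^*(A)=t_0$ supplied by the initialisation $ac(A)=\{(t_0,\forall)\}$). The prefix of $R^*$ ending at $T_j$ is itself a consistent arrival connection with arrival time at most $d^*(S)$, so the inductive hypothesis gives $d^*(T_j)\le d^*(S)$, and the relaxation of the edge carrying $c_{j+1}$ during $T_j$'s extraction forces the key of $T_{j+1}$ in $Q$ to drop to at most $d^*(S)$. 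Since $(t,S)$ is the priority-queue minimum, $t\le d^*(S)$.

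The main obstacle is that the repeated \emph{minimum} step may discard exactly the arrival connection needed by $R^*$ at some intermediate $T_j$; one must argue that a retained dominant arrival connection serves just as well. This is precisely the content of Lemma~\ref{lemma:dominate_replace_eap}: any discarded prefix is dominated by a retained $P\in ac(T_j)$, and we may substitute $P$ for that prefix in $R^*$ while keeping the whole path consistent and its arrival time no larger. Hence when \emph{link} is applied to $ac(T_j)$ along $c_{j+1}$ and merged with $ac(T_{j+1})$ via \emph{minimum}, an arrival connection at $T_{j+1}$ of arrival time at most $d^*(S)$ either survives or is itself dominated by something at least as good, so the key of $T_{j+1}$ is at most $d^*(S)$, as required. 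The wildcard stop event $\forall$ at the source causes no issue since it renders the transfer clauses (3)--(4) of the arrival-connection domination vacuous there, so the base case of the induction is genuine. Termination with $Q$ empty before reaching $B$ corresponds to $d^*(B)=\infty$, which the $\perp$ return value correctly signals.
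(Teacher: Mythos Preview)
Your proposal is correct and follows essentially the same approach as the paper: both arguments rest on (i) the fact that \emph{link} and \emph{minimum} only produce consistent arrival connections, (ii) Lemma~\ref{lemma:dominate_replace_eap} to guarantee that discarding a dominated prefix never destroys the optimum, and (iii) non-negativity of connection lengths to make the Dijkstra extraction order sound. The paper compresses this into four sentences, whereas you spell out the standard inductive exchange argument along an optimal connection $R^*$; your version is a faithful fleshing-out of what the paper only sketches.
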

\begin{proof}
The query algorithm only creates consistent connections because link and minimum do so.
Lemma~\ref{lemma:dominate_replace_eap} ensures that there is never a connection with earlier arrival time.
The connections depart from station $A$ not before $t_0$ by initialization.
And since the length of any connection is non-negative, the minimum length arrival connection $P$ at station $B$ after the first deleteMin() of $B$ is a solution to the EAP.
\end{proof}

\subsection{Profile Query}
A profile query $(A,B)$ is similar to a time query.
However, we compute dominant connections instead of dominant arrival connections.
Also we cannot just stop the search when we remove $B$ from the priority queue for the first time.
We are only allowed to stop the search when we know that we have a dominant set of all consistent connections between $A$ and $B$.
E.g. for daily operating trains, we can compute a maximum duration for a set of connections and can use it to prune the search.
The efficient implementations of the minimum and link operation, see Appendix~\ref{sec:implementation}, are also more complex.
Similar to a time query, we use a suitable order of the connections, primarily ordered by departure time.
The minimum operation is an almost linear merge: we merge the connections in descending order and remove dominated ones.
This is done with a sweep buffer that keeps all previous dominant connections that are relevant for the current departure time.
The link operation, it links connections from station $A$ to $S$ with connections from station $S$ to $T$, is more complex: in a nutshell, we process the sorted connections from $A$ to $S$ one by one, compute a relevant interval of connections from $S$ to $T$ as for the time query, and remove dominated connections using a sweep buffer like for the minimum operation.

\section{Contraction}\label{s:contraction}
In this section, we describe how we contract a station graph timetable network.
Interestingly, we cannot directly use the algorithms used for time-dependent road networks \cite{bdsv-tdch-09}.
The most time consuming part of the contraction is the witness search: given a node $v$ and an incoming edge $(u,v)$ and an outgoing edge $(v,w)$, is a shortcut between $u$ and $w$ necessary when we contract $v$?
For time-dependent road networks, a min-max search from $u$ to $w$ is performed.
It computes a small corridor consisting only of the min. and max. cost path; this corridor is used by a profile search to decide the necessity of a shortcut.
However, in timetable networks, min-max search is not successful, especially the maximum travel time for an edge is very high, e.g. when there is no service during the night.
So we need another way to improve the computation of shortcuts.
Our idea is to use a one-to-many search from $u$ and use it to identify necessary shortcuts for the contraction of all its neighbors $v$.
That is an improvement over \cite{gssd-chfsh-08}, where we performed a one-to-many search from $u$ for each of its neighbors separately.
Before we start contracting nodes, we do this for all nodes $u$ and store the necessary shortcuts $(u,w)$ with each node $v$.
However, we do not store the connections of the shortcuts, but compute them at the contraction of $v$.
These stored shortcuts do not take a lot of space since timetable networks are much smaller than road networks.
When we contract a node $v$, we just add the necessary shortcuts after we computed their connections.
We add shortcuts even in case of equality, so only the endpoints of the newly introduced shortcuts are affected and we need to update their stored shortcuts.
Consider a shortcut $(u,w)$ that is added during the contraction of $v$.
Then we currently do not know which pair of edges $(u,w)$, $(w,x)$ needs a shortcut when $w$ is contracted.
A one-to-many forward search from $u$ will give all necessary information to update the stored shortcuts at $w$.
And analogously, a one-to-many backward search from $w$ will give all necessary information to update the stored shortcuts at $u$.
So at most one one-to-many forward search and one one-to-many backward search from each neighbor of $v$ is necessary.
When we add a new shortcut $(u,w)$ but there is already an edge $(u,w)$, we merge both edges so there are never parallel edges.
Avoiding these parallel edges is important for the contraction, as it performs worse on dense graphs.
Thereby, we also ensure that we can uniquely identify an edge with its endpoints.
The one-to-many searches can be limited by the duration of the longest shortcut between $u$ and $w$.
We also limit the number of hops and the number of transfers.
As observed in \cite{gssd-chfsh-08}, this accelerates the witness search at the cost of missed witnesses and potentially more shortcuts.

We could omit loops in static and time-dependent road networks.
But for station graph timetable networks, loops are sometimes necessary.
We will give an example:

\begin{example}
Consider a train that departs at station $A$ at 12:00 then arrives/departs at station $B$ at 12:01, then $C$ at 12:02, then again $B$ at 12:03 and then $D$ at 12:04.
The set of elementary connections is \{(1, 1, $A$, $B$, 12:00, 12:01), (1, 1, $B$, $C$, 12:01, 12:02), (1, 2, $C$, $B$, 12:02, 12:03), (2, 1, $B$, $D$, 12:03, 12:04)\}, also shown in Figure~\ref{fig:example3}.
Let the transfer time at station $B$ be 5 minutes.
We want to go from $A$ to $D$.
In our model, it is not possible to transfer from the train at $B$ at 12:01 to the same train at $B$ at 12:03.
However, it is possible to stay at the train via $C$ and then we get a consistent connection from $A$ to $D$ arriving at 12:04.
Thus when we contract station $C$, we need to add a loop at station $B$.

\begin{figure}[htb]
\centering
\tikzstyle{vertex}=[circle,minimum size=17pt,inner sep=0pt,draw=black]
\tikzstyle{weight} = [font=\small]
 
\begin{tikzpicture}[scale=1,auto,swap,>=stealth,->]
    \foreach \pos/\name in {{(0,2)/A}, {(5,2)/B}, {(10,2)/C}, {(5,0)/D}} \node[vertex] (\name) at \pos {$\name$};
    \path (A) edge node[weight,above] {(1, 1, $A$, $B$, 12:00, 12:01)} (B);
    \path[bend left=10] (B) edge node[weight,above] {(1, 1, $B$, $C$, 12:01, 12:02)} (C);
    \path[bend left=10] (C) edge node[weight,below] {(1, 2, $C$, $B$, 12:02, 12:03)} (B);
    \path (B) edge node[weight,left] {(2, 1, $B$, $D$, 12:03, 12:04)} (D);
\end{tikzpicture}
\caption{Every node is a station and every edge an elementary connection.
Let $\mathit{transfer}(B) = 5$, then a loop at station $B$ is necessary after the contraction of station $C$.}
\label{fig:example3}
\end{figure}
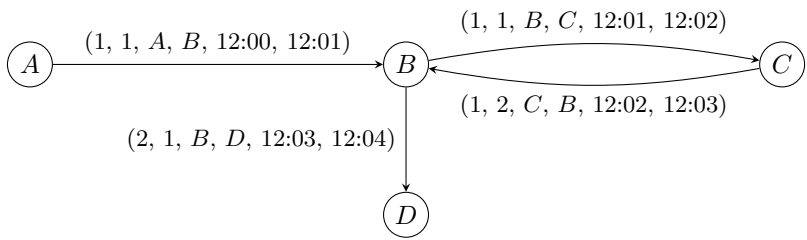
\end{example}

Loops also make the witness computation and the update of the stored shortcuts more complex.
A shortcut $(u,w)$ for node $v$ with loop $(v,v)$ must not only represent the path $\seq{u,v,w}$ but also $\seq{u,v,v,w}$.
So when we add a shortcut $(v,v)$ during the contraction of another node, we need to recompute all stored shortcuts of node $v$.

We use two terms for the node ordering:
(a) The edge quotient, the quotient between the amount of shortcuts added and the amount of edge removed from the remaining graph.
(b) The hierarchy depth, an upper bound on the amount of hops that can be performed in the resulting hierarchy.
Initially, we set depth($u$) = 0 and when a node $v$ is contracted, we set depth($u$) = max(depth($u$),depth($v$)+1) for all neighbors $u$.
We weight (a) with 10 and (b) with 1 in a linear combination to compute the node priorities.
The nodes are contracted by computing independent node sets with with a 2-neighborhood as described in \cite{v-ptdch-09}.

\section{Query}

As any hierarchical time-dependent time query, we suffer from the unknown arrival time.
This problem is solved in \cite{bdsv-tdch-09} by computing a corridor from the target node using a min-max backward search in the hierarchy.
We replace this min-max search by a breath first search since min-max search does not work so well.
Also we do not use the stall-on-demand technique since the additional overhead does not pay off.
Our forward and backward search are not interleaved; we first perform the backward and then the forward search.
In road networks \cite{bdsv-tdch-09}, even for the profile query, the computation of a forward and backward corridor using min-max searches is beneficial.
Again, we omit the min-max searches and just use a bidirectional interleaved profile query.

\begin{theorem}
The time query in a station graph CH solves the EAP.
\end{theorem}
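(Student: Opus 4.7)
The plan is to follow the standard correctness argument for contraction hierarchies, adapted to the station graph time query. The proof reduces to establishing two properties: (P1) the contracted hierarchy preserves earliest arrival times, so that every EAP solution in the original station graph corresponds to an \emph{up-down} path in the CH whose linked arrival connection dominates the original; and (P2) the bidirectional query explores some such up-down path and therefore settles $B$ with the correct arrival time.

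For (P1) I would proceed by induction on the contraction sequence. The base case is the uncontracted station graph, which is handled by the previous theorem. In the inductive step, when node $v$ is contracted, the witness-aware shortcut construction of Section~4 guarantees that for every pair of an incoming edge $(u,v)$ and an outgoing edge $(v,w)$, including the loop case $(v,v)$ illustrated by the second example, either a witness already dominates every through-connection, or a shortcut $(u,w)$ is inserted carrying exactly the dominant linked connections of $\fn{(u,v)}$ and $\fn{(v,w)}$, merged into any existing edge $(u,w)$. Since link and minimum preserve dominance by Lemma~\ref{lemma:dominate_replace}, no dominant through-connection at $v$ is lost, so every up-down realization of an EAP solution survives contraction.

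For (P2) I would observe that the backward BFS from $B$ in the reversed upward CH determines the set $M$ of nodes that can still reach $B$ via some upward path. The forward time query from $A$ then behaves exactly like the label-correcting algorithm of the previous theorem, but restricted to the subgraph induced by upward CH edges with target in $M$. By (P1) at least one EAP-realizing up-down path exists in the CH; its upward half is explored by the forward search, its apex lies in $M$ by definition of $M$, and its downward half is already baked into the shortcuts visited along the way. Consequently the dominant arrival connection eventually stored at $B$ has arrival time no later than the EAP optimum, while the consistency preservation of link and minimum guarantees it is no earlier.

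The main obstacle I expect lies in the interaction between loops and the inductive step of (P1): contracting a node can introduce a loop at one of its neighbors, which in turn forces a recomputation of that neighbor's stored shortcuts, and I need to verify that the invariant ``every dominant through-connection is represented'' is maintained across such recomputations, in particular that paths of the form $\langle u,v,v,w\rangle$ are covered in addition to $\langle u,v,w\rangle$. A secondary concern is justifying that the BFS corridor, although strictly weaker than the min-max corridor used for time-dependent road networks, is still sufficient: because BFS yields a superset of the nodes any time-pruned backward search would expose, it is conservative and cannot cause the forward search to miss the apex of an EAP-realizing up-down path.
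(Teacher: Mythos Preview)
Your plan matches the paper's own argument closely: the paper also proceeds by path transformation in the style of the standard CH correctness proof, invoking Lemma~\ref{lemma:dominate_replace} to justify that each contracted node on an optimal connection can be removed by replacing its two (or three, in the loop case) incident edges by either a shortcut or a witness, and it singles out the loop case $\langle u,v,v,w\rangle$ as the one new twist over static CH. Your (P1) is a more explicit rendering of exactly that argument, and your identification of loops as the main obstacle is on target.

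There is, however, a slip in your (P2). You describe the forward search as restricted to \emph{upward} CH edges with target in $M$ and claim that the downward half of the up--down path is ``already baked into the shortcuts visited along the way.'' That is not how the time-dependent CH query operates: shortcuts encode contracted intermediate nodes, not the downward direction. The forward time query relaxes upward edges freely and, in addition, relaxes \emph{downward} edges whose endpoints lie in the backward corridor $M$; the downward half of the up--down path is traversed explicitly by the forward search once the apex is reached. Your conservativeness remark about BFS is precisely what is needed to repair this: since the BFS corridor is a superset of what any pruned backward search would mark, the entire downward half---not just the apex---lies in $M$, and the forward search can descend along it. With that correction your (P2) goes through and the overall plan is sound.
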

\begin{proof}
The proof is similar to the one given in \cite{g-ch-08}.
Here, we will only present a short outline of the proof.
From a given time query $(A,B,t_0)$ and a shortest path in the original graph, we construct a path that is found by our query algorithm.
Since we use profile searches for the witness computation, Lemma~\ref{lemma:dominate_replace} ensures that we can recursively remove a node from the path by replacing the neighboring edges either with a shortcut or a witness path.
Different to \cite{g-ch-08}, a shortest path may contain a station more than once.
Thus when we remove a node with loop, we remove not only two but three edges from the path.
We cover this case with our special treatment of nodes with loops during the contraction.
\end{proof}

\section{Experiments}\label{s:experiments}

\paragraph{Environment.}
Experiments have been done on one core of a dual Xeon 5345 processor clocked at 2.33 GHz with 16 GB main memory and $2\times2\times$ 4~MB of cache, running SuSE Linux 11.1 (kernel 2.6.27).
The program was compiled by the GNU C++ compiler 4.3.2 using optimization level 3.

\paragraph{Test Instances.}
We have used real-world data from the European railways.
The network of the long distance connections of Europe (\texttt{eur-longdist}) is from the winter period 1996/97.
The network of the local traffic in Berlin/Brandenburg (\texttt{ger-local1}) and of the Rhein/Main region in Germany (\texttt{ger-local2}) are from the winter period 2000/01.
The sizes of all networks are listed in Table~\ref{tab:network_sizes}.

\begin{table}[htb]
\centering
\caption{Network sizes.
We give the the number of nodes and edges in the resulting graph for both models.
Note that in the station graph model, the number of nodes corresponds to the number of stations.}
\label{tab:network_sizes}
\begin{tabular}{l|r|r|r|r|r|r}
        &          & trains/& elementary  & \multicolumn{2}{|c|}{time-dependent} & \multicolumn{1}{|c}{station} \\
network & stations & buses & connections & nodes & edges  & edges \\
\hline
\texttt{eur-longdist} & 30\,517 & 167\,299 & 1\,669\,666 & 535\,963 & 1\,456\,904 & 88\,091 \\
\texttt{ger-local1}   & 12\,069 &  33\,227 &    680\,176 & 225\,797 &    600\,690 & 33\,473 \\
\texttt{ger-local2}   &  9\,902 &  60\,889 & 1\,128\,465 & 183\,207 &    704\,673 & 26\,678 \\
\end{tabular}
\end{table}

\paragraph{Results.}
We selected 1\,000 random queries and give average performance measures.
We compare the time-dependent model and our new station model using a simple unidirectional Dijkstra algorithm in Table~\ref{tab:performance_model}.
Time queries have a good query time speedup above 4 and even more when compared to the \#delete mins.
However, since we do more work per delete min, this difference is expected.
Profile queries have very good speedup around 6 to 8 for all tested instances.
Interestingly, our speedup of the \#delete mins is even better than for time queries.
We assume that more re-visits occur since there are often parallel edges between a pair of stations represented by its route nodes.
Our model does not have this problem since we have no parallel edges and each station is represented by just one node.
It is not possible to compare the space consumption per node since the number of nodes is different in the different models.
So we give the the absolute memory footprint: it is so small, we did not even try to reduce it, altough there is potential for that.

\begin{table}[htb]
\centering
\caption{Performance of the station graph model compared to the time-dependent model on plain Dijkstra queries.
\emph{\#delete mins} denotes the number of nodes removed from the priority queue, query \emph{times} are given in milliseconds.
Moreover, we report the \emph{speedup} over the corresponding value from the time-dependent model.}
\label{tab:performance_model}
\begin{tabular}{l|l|r|rrrr|rrrr}
        &       &       & \multicolumn{4}{|c|}{\textsc{time-queries}} & \multicolumn{4}{|c}{\textsc{profile-queries}}\\
        &       & space & \#delete & speed & time & speed & \#delete & speed & time & speed \\
network & model &  [MB] &  mins    & up    & [ms] & up    & mins    & up    & [ms] & up \\
\hline
\texttt{eur-longdist} & Dependent & 27.5 & 253\,349 & 1.0 & 68.0 & 1.0 & 1\,900\,520 & 1.0 & 2\,482 & 1.0 \\ 
                      & Station   & 48.3 &  14\,519 &17.4 & 14.3 & 4.8 &     48\,420 &39.3 &    333 & 7.5 \\ 
\hline
\texttt{ger-local1}   & Dependent & 11.6 & 110\,864 & 1.0 & 26.0 & 1.0 & 1\,172\,320 & 1.0 & 1\,642 & 1.0 \\ 
                      & Station   & 19.6 &   5\,977 &18.5 &  6.2 & 4.2 &     33\,647 &34.8 &    289 & 5.7 \\ 
\hline
\texttt{ger-local2}   & Dependent & 16.1 &  98\,684 & 1.0 & 25.6 & 1.0 & 1\,145\,560 & 1.0 & 2\,822 & 1.0 \\ 
                      & Station   & 29.4 &   5\,097 &19.4 &  5.5 & 4.7 &     27\,701 &41.3 &    345 & 8.2 \\ 
\end{tabular}
\end{table}

Before we present our results for CH, we would like to mention that we were unable to contract the same networks in the time-dependent model.
The contraction took days and the average degree in the remaining graph exploded.
Even when we contracted whole stations with all of its route nodes at once, it did not work.
It failed since the necessary shortcuts between all the onboard nodes multiplied quickly.
So we developed the station graph model to fix these problems.
Table~\ref{tab:performance_ch} shows the the resulting preprocessing and query performance.
We get preprocessing times between 4 to 6 minutes using a hop limit of 7, these times are exceptional low compared to previous publications \cite{d-tdsr-09,bdgm-atdmc-09}.
This is sufficient to reduce time queries below 1\,ms for all tested instances.
CHs work very well for \texttt{eur-longdist} where we get speedups of more than 35 for time queries and 54 for profile queries.
When we multiply the speedup of the time-dependent model, we get even a speedup of 172 (time) and 409 (profile) respectively.
The network \texttt{ger-local2} is also suited for CH, the ratio between elementary connections and stations is just very high so there is more work per settled node.
More difficult is \texttt{ger-local1}; in our opinion, this network is less hierarchically structured.
We see that on the effect of different hop limits for precomputation: we chose 7 as a hop limit for fast preprocessing and then selected 18 as it achieves the best query times for \texttt{ger-local1}.
The smaller hop limit increases time query times by about 50\%, whereas the other two networks just suffer an increase of about 20\%.
So important witnesses in \texttt{ger-local1} contain more edges indicating a lack of hierarchy.

We do not really have to worry about preprocessing space since those networks are very small.
The number of edges roughly doubles for all instances.
We observe similar results for static road networks \cite{gssd-chfsh-08}, but there we can save space with bidirectional edges.
But in timetable networks, we do not have bidirectional edges with same weight, so we need to store them separately.
In contrast to time-dependent road networks \cite{bdsv-tdch-09} (Germany midweek: 0.4 GB $\rightarrow$ 4.4 GB), CHs increase the memory consumption by not more than a factor 2.8 (\texttt{ger-local1}: 19.6 MB $\rightarrow$ 54.6 MB).
So in our model, CHs have not only fast preprocessing and query times but are even space efficient.

It is hard to compare ourselves to \cite{bdgm-atdmc-09} since we do not use the same graph and also not the same scenario.
When we just look at the numbers, we are about 2-3 orders of magnitude faster, both for preprocessing and for query times, but our scenario lacks a lot of features.

\begin{table}[htb]
\centering
\caption{Performance of CH.
Preprocessing times are given in seconds, the overhead in megabytes.
Moreover, we report the increase in edge count over the input.
\emph{\#delete mins} denotes the number of nodes removed from the priority queue, query \emph{times} and \emph{speed-up} over a plain Dijkstra are given.
}
\label{tab:performance_ch}
\begin{tabular}{l|r|rrr|rrrr|rrrr}
        &       & \multicolumn{3}{|c|}{\textsc{preprocessing}} & \multicolumn{4}{|c|}{\textsc{time-queries}} & \multicolumn{4}{|c}{\textsc{profile-queries}}\\
        & hop-  & time & space & edge     & \#del. & speed & time & speed & \#del. & speed & time & speed \\
network & limit & [s]  &  [MB] & inc.     & mins     & up    & [$\mu$s] & up    & mins     & up    & [ms] & up \\
\hline
\texttt{eur-}     &  7 &    277 & 48.6 &  88\% & 194 & 75 & 399 & 35.8 & 266 & 182 &  6.1 & 54.6 \\ 
\texttt{longdist} & 18 & 1\,000 & 48.0 &  85\% & 186 & 78 & 331 & 43.2 & 260 & 186 &  5.4 & 61.7 \\ 
\hline
\texttt{ger-}     &  7 &    244 & 35.0 & 134\% & 203 & 29 & 968 &  6.4 & 431 &  78 & 50.2 &  5.8 \\ 
\texttt{local1}   & 18 & 1\,348 & 33.4 & 125\% & 186 & 32 & 658 &  9.5 & 389 &  86 & 40.7 &  7.1 \\ 
\hline
\texttt{ger-}     &  7 &    367 & 41.2 & 122\% & 155 & 33 & 424 & 13.0 & 251 & 110 & 17.8 & 19.4 \\ 
\texttt{local2}   & 18 &    746 & 39.8 & 115\% & 151 & 34 & 358 & 15.4 & 258 & 107 & 15.7 & 22.0 \\ 
\end{tabular}
\end{table}

\section{Conclusions}\label{s:conclusions}
Our model, that has just one node per station, is clearly superior to the time-dependent model for the given scenario.
Although the link and minimum operations are more expensive, we are still faster than in the time-dependent model since we need to execute them less often.
Also all known speedup techniques that work for the time-dependent model should work for our new model.
Most likely, they even work better since the hierarchy of the network is more visible because of the one-to-one mapping of stations to nodes and the lack of parallel edges.
We demonstrate that with CHs, they work very well.
We achieve query times below 1\,ms with preprocessing times around 6 minutes.
Our algorithm is therefore suitable for simple web services, where small query times are very important and can compensate for the quality of the results.

\section{Future Work}\label{s:future_work}
Our tested instances are very small, especially when we compare it to road networks.
However, it is very hard to get more data from the travel agencies.
Moreover, combining data from several agencies is hard because of proprietary formats, different levels of granularity, etc.
The experience for road networks shows that speedups increase with the size of the network.
So we can hope for even better speedups in larger networks.
Not only larger, but more realistic scenarios are also important.
Our scenario is very basic and the one from \cite{bdgm-atdmc-09} has a lot of features.
It would be interesting to see the effects of the single features on the speedup techniques, i.e. where do the 2-3 orders of magnitude in performance come from?
What are the performance costs of multi-criteria, traffic days, footpaths, etc. and what can we learn to develop faster algorithms?
Or is our implementation just highly tuned and the implementation of \cite{bdgm-atdmc-09} could be improved with our ideas?

We already saw in Section~\ref{s:experiments} that CHs work worse when the network is not very hierarchical.
Public transportation in Europe is very good and also has a lot of hierarchy.
But in other countries, e.g. the United States or Brazil, these networks are worse.
There exists e.g. large bus networks with little structure; most likely improved goal directed speedup techniques are necessary for such networks.
In some networks, there is no continuous service throughout the day and stations are only important for a few hours.
Also the station placement is not very good in every network, e.g. there is a separate station for each bus, and transfers often require a lot of walking, or there are dozens of stops in a very small area.
We conjecture, that even in a network with hierarchy, contraction fails when there are too many important footpaths.

But there are not only difficult but also different public transportation timetables, e.g. in London, they specify how frequent a bus will arrive at a station instead of fixed arrival times.
We can integrate this frequency based routing in our station graph model.
Technically, an edge stores not only a set of timetable connections but also a set of frequency intervals.
Of course, in most cases, one of the sets will be empty.
When we link an timetable connection and a frequency based connection, we get a connection with a specific departure and arrival time and thus we can handle it as a normal timetable connection.

We should also evaluate different speedup techniques in combination with the station graph model.
We only tried CHs, since our original motivation was to proof that they work in timetable networks.
But combinations with goal-directed techniques should accelerate the query even more.

\bibliographystyle{plain}
\bibliography{references}

\begin{thebibliography}{10}

\bibitem{bdsv-tdch-09}
Veit Batz, Daniel Delling, Peter Sanders, and Christian Vetter.
\newblock {Time-Dependent Contraction Hierarchies}.
\newblock In {\em Proceedings of the 11th Workshop on Algorithm Engineering and
  Experiments (ALENEX'09)}, pages 97--105. SIAM, April 2009.

\bibitem{bdgm-atdmc-09}
Annabell Berger, Daniel Delling, Andreas Gebhardt, and Matthias
  {M{\"u}ller--Hannemann}.
\newblock {Accelerating Time-Dependent Multi-Criteria Timetable Information is
  Harder Than Expected}.
\newblock In {\em Proceedings of the 9th Workshop on Algorithmic Approaches for
  Transportation Modeling, Optimization, and Systems (ATMOS'09)}, Dagstuhl
  Seminar Proceedings, 2009.
\newblock To appear.

\bibitem{bm-somcs-09}
Annabell Berger and Matthias {M{\"u}ller--Hannemann}.
\newblock {Subpath-Optimality of Multi-Criteria Shortest Paths in Time- and
  Event-Dependent Networks}.
\newblock Technical Report~1, University Halle-Wittenberg, Institute of
  Computer Science, 2009.

\bibitem{bj-tnmaf-04}
Gerth Brodal and Riko Jacob.
\newblock {Time-dependent Networks as Models to Achieve Fast Exact Time-table
  Queries}.
\newblock In {\em Proceedings of ATMOS Workshop 2003}, pages 3--15, 2004.

\bibitem{d-tdsr-09}
Daniel Delling.
\newblock {Time-Dependent SHARC-Routing}.
\newblock {\em Algorithmica}, July 2009.
\newblock Special Issue: European Symposium on Algorithms 2008.

\bibitem{dw-tdrp-09}
Daniel Delling and Dorothea Wagner.
\newblock {Time-Dependent Route Planning}.
\newblock In Ravindra~K. Ahuja, Rolf~H. M{\"o}hring, and Christos Zaroliagis,
  editors, {\em {Robust and Online Large-Scale Optimization}}, Lecture Notes in
  Computer Science. Springer, 2009.
\newblock Accepted for publication, to appear.

\bibitem{g-ch-08}
Robert Geisberger.
\newblock {Contraction Hierarchies}.
\newblock Master's thesis, Universit{\"a}t Karlsruhe (TH), Fakult{\"a}t f{\"u}r
  Informatik, 2008.

\bibitem{gssd-chfsh-08}
Robert Geisberger, Peter Sanders, Dominik Schultes, and Daniel Delling.
\newblock {Contraction Hierarchies: Faster and Simpler Hierarchical Routing in
  Road Networks}.
\newblock In Catherine~C. McGeoch, editor, {\em Proceedings of the 7th Workshop
  on Experimental Algorithms (WEA'08)}, volume 5038 of {\em Lecture Notes in
  Computer Science}, pages 319--333. Springer, June 2008.

\bibitem{mn-eatm-98}
Patrice Marcotte and Sang Nguyen, editors.
\newblock {\em {Equilibrium and Advanced Transportation Modelling}}.
\newblock Kluwer Academic Publishers Group, 1998.

\bibitem{ms-faatc-07}
Matthias {M{\"u}ller--Hannemann} and Mathias Schnee.
\newblock {Finding All Attractive Train Connections by Multi-Criteria Pareto
  Search}.
\newblock In {\em {Algorithmic Methods for Railway Optimization}}, volume 4359
  of {\em Lecture Notes in Computer Science}, pages 246--263. Springer, 2007.

\bibitem{mw-pspof-01}
Matthias {M{\"u}ller--Hannemann} and Karsten Weihe.
\newblock {Pareto Shortest Paths is Often Feasible in Practice}.
\newblock In {\em Proceedings of the 5th International Workshop on Algorithm
  Engineering (WAE'01)}, volume 2141 of {\em Lecture Notes in Computer
  Science}, pages 185--197. Springer, 2001.

\bibitem{n-t-95}
Karl Nachtigall.
\newblock {Time depending shortest-path problems with applications to railway
  networks}.
\newblock {\em European Journal of Operational Research}, 83(1):154--166, 1995.

\bibitem{or-spmda-90}
Ariel Orda and Raphael Rom.
\newblock {Shortest-Path and Minimum Delay Algorithms in Networks with
  Time-Dependent Edge-Length}.
\newblock {\em Journal of the ACM}, 37(3):607--625, 1990.

\bibitem{or-mwptd-91}
Ariel Orda and Raphael Rom.
\newblock {{Minimum Weight Paths in Time-Dependent Networks}}.
\newblock {\em Networks}, 21:295--319, 1991.

\bibitem{pswz-emtip-07}
Evangelia Pyrga, Frank Schulz, Dorothea Wagner, and Christos Zaroliagis.
\newblock {Efficient Models for Timetable Information in Public Transportation
  Systems}.
\newblock {\em ACM Journal of Experimental Algorithmics}, 12:Article 2.4, 2007.

\bibitem{sww-daola-00}
Frank Schulz, Dorothea Wagner, and Karsten Weihe.
\newblock {Dijkstra's Algorithm On-Line: An Empirical Case Study from Public
  Railroad Transport}.
\newblock {\em ACM Journal of Experimental Algorithmics}, 5, 2000.

\bibitem{swz-umlgt-02}
Frank Schulz, Dorothea Wagner, and Christos Zaroliagis.
\newblock {Using Multi-Level Graphs for Timetable Information in Railway
  Systems}.
\newblock In {\em Proceedings of the 4th Workshop on Algorithm Engineering and
  Experiments (ALENEX'02)}, volume 2409 of {\em Lecture Notes in Computer
  Science}, pages 43--59. Springer, 2002.

\bibitem{v-ptdch-09}
Christian Vetter.
\newblock {Parallel Time-Dependent Contraction Hierarchies}.
\newblock Master's thesis, Universit{\"a}t Karlsruhe (TH), Fakult{\"a}t f{\"u}r
  Informatik, 2009.

\end{thebibliography}
{\clearpage
\appendix

\section{Implementation Details}
\label{sec:implementation}

The set of connections for each edge are stored as an ordered array.
They are primarily ordered by departure time and then secondarily ordered by arrival time and then critical arrival before non-critical arrivals.
For each connection, we only store a representative with departure time in $[0,1439]$.
So the array actually represents a larger \emph{outrolled} array by simply concatenating and shifting the times by 1440.
The first piece is at day 0, the second piece at day 1, etc.
The set of arrival connections is also stored by an ordered array.
They are primarily ordered by arrival time and then critical arrival before non-critical arrivals.
This ensures that no arrival connection in the array dominates an arrival connection with lower index.

A basic link algorithm for the time query would link to all connections and afterwards remove the dominated arrival connections.
Let $g \Def \abs{ac(S)}$ and $h \Def \abs{\fn{e=(S,T)}}$, the basic algorithm would create up to $\Th{g \cdot h}$ connections.
Especially $h$ can be very large even though usually only a small range in $\fn{e}$ is relevant for the link operation.
When we identified the first connection we can link to, we can store a \emph{dominant range} with it so that all connections after this range result in dominated arrival connections.
We could distinguish between linking to a certain connection with and without transfer but we restrict ourselves only to the case with transfer.
This results in a practically very efficient link operation.
So given an array of arrival connections $ac(S)$ and an array of connections of an edge $\fn{e}$ to relax, the link will work as follows:

\begin{enumerate}
  \item $edt \Def \min_{P \in ac(S)}arr(P)$ (mod 1440) // \underline{e}arliest \underline{d}eparture \underline{t}ime, in $[0,1439]$
  \item $ett \Def edt + \textit{transfer}(S)$ (mod 1440) // \underline{e}arliest departure with \underline{t}ransfer \underline{t}ime
  \item Find first connection $P_n\in \fn{e}$ with minimal $\textit{cycle difference}(edt, dep(P_n))$ using buckets.
  \item Find first connection $P_t\in \fn{e}$ with minimal $\textit{cycle difference}(ett, dep(P_t))$.
Connection $P_t$ gives a dominant range that is identified by the first connection $P_e$ outside the range
This partitions the outrolled array of $\fn{e}$:

{\unitlength1.2em\begin{picture}(20,2)
\put(0,0){\framebox(2,1){...}}
\put(2,1){\makebox(1,1){$P_n$}}
\put(2,0){\framebox(7,1){link w/o transfer}}
\put(9,1){\makebox(1,1){$P_t$}}
\put(9,0){\framebox(6,1){link w/ transfer}}
\put(15,1){\makebox(1,1){$P_e$}}
\put(15,0){\framebox(2,1){...}}
\end{picture}}\\
We may only link to a connection in $[P_n,P_t)$ without (w/o) transfers and thus all arrival connections in $ac(S)$ are relevant to decide which consistent arrival connections we can create there.
It is consistent to link to all connections with transfers from $P_t$ on.
\item While we link, we remember the minimal arrival time and use it to skip dominated arrival connections.
\item Finally we sort the resulting connections and remove the dominated ones.
This step is necessary because the minimum arrival time may decrease while we link and we may have to remove duplicates, too.
\end{enumerate}

Given two sets of arrival connections at a node, we want to build the dominant set of the union, the \emph{minimum}.
This can be done in linear time by just merging them.
Sometimes arrival connections are equivalent but not identical.
Two arrival connections are \emph{equivalent} if they are identical or have the same arrival time and neither of them has a critical arrival.
In this case we must keep just one of them.
We base this decision so that we minimize the number of queue inserts in the query algorithm, e.g. prefer the one from $ac(T)$ if available.

\subparagraph{Runtime.}
The above link operation is is more complex than a usual link operation that maps departure time to arrival time.
However, we give an idea why this link operation is very fast and may work in constant time in many cases.
The experiments in Section~\ref{s:experiments} show that it is indeed very efficient.
Let $b$ be the number of connections in the bucket.
Let $c_d$ be the number of connections that depart within the transfer time window $[P_n,P_t)$ at the station.
Let $c_a$ be the number of arrival connections $|ac(S)|$.
Let $r$ be the number of connections that depart within the range $[P_t, P_e)$.
The runtime of link is then $\Oh{b+c_dc_a+r}$.
We choose the number of buckets proportional to the number of connections, so $b$ is in many cases constant.
For linking connections without transfer, we have the product $\Oh{c_dc_a}$ as summand in the runtime.
We could improve the product down to $\Oh{c_d + c_a + u}$ with hashing, where $u$ is the number of linked connections.
But this is slower in practice since $c_d$ and $c_a$ are usually very small.
That is because the station-dependent transfer time window is usually very small, and also only very few connections depart and arrive within a single window.
It is harder to give a feeling of the size of the range $[P_t, P_e)$.
Assume that every connection operates daily.
Let be $d$ the difference between the length of $P_t$ and the minimum length of any connection in $\fn{e}$.
$d+\mathit{transfer}(S)$ is an upper bound on the size of the time window of this range.
So when $d$ is small, and this should be in many cases, also $r$ is small.

\paragraph{Computing the Dominant Range.} Besides the buckets we also need to compute the dominant ranges.
Lemma~\ref{l:dominant_range} gives the instructions how to efficiently compute them using a sweep algorithm approach.

\begin{lemma}
\label{l:dominant_range}
Given an array $F$ of connections between two stations $S_1$ and $S_2$ that operate daily. The array is primarily orderd by departure time and the secondarily ordered by arrival time and then critical arrival before non-critical arrivals.
Let $P$ be a arrival connection at station $S_1$ that can link with a connection $Q$ in the outrolled array with a transfer.
Then all connections that are later in this outrolled array and may not be dominated by the new arrival connection, created by the link of $P$ and $Q$, depart earlier than $dep(Q)+d+\mathit{transfer}(S_2)$.
$d$ is the difference between the length of $Q$ and the minimum length of any connection in $F$.
\end{lemma}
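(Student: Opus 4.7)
The plan is to prove the lemma by contrapositive: I will show that any connection $Q'$ later in the outrolled array with $dep(Q') \ge dep(Q) + d + \mathit{transfer}(S_2)$ produces a linked arrival connection that is already dominated by the arrival connection obtained from linking $P$ with $Q$, and therefore is outside the dominant range. Writing $P_Q$ and $P_{Q'}$ for the two linked arrival connections at $S_2$, this reduces to verifying the three conditions in the definition of arrival-connection domination.

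Condition (1) is immediate since both $P_Q$ and $P_{Q'}$ end at $S_2$. For condition (2), I will compute
\[
arr(P_Q) = dep(Q) + \mathit{length}(Q), \qquad arr(P_{Q'}) = dep(Q') + \mathit{length}(Q').
\]
Because $Q' \in F$ and $d$ is defined as the gap between $\mathit{length}(Q)$ and the minimum length attained in $F$, we have $\mathit{length}(Q') \ge \mathit{length}(Q) - d$. Combining this with the hypothesis on $dep(Q')$ yields the one-line bound $arr(P_{Q'}) \ge arr(P_Q) + \mathit{transfer}(S_2)$, which is stronger than what (2) demands.

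For condition (3), the only case that needs work is when $Z_2(Q') \ne Z_2(Q)$ and $P_{Q'}$ is a critical arrival; in that case I must verify $ndep(P_{Q'}) - arr(P_Q) \ge \mathit{transfer}(S_2)$. The next departure time of a stop event cannot precede its own arrival time, so $ndep(P_{Q'}) \ge arr(P_{Q'})$, and the slack of $\mathit{transfer}(S_2)$ established under (2) closes this gap immediately.

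I expect the main obstacle to be careful bookkeeping in the outrolled array: the quantities $dep(Q')$, $\mathit{length}(Q')$ and $ndep(P_{Q'})$ must all be interpreted as absolute times with the appropriate day offsets, so that the arithmetic inequalities compare like quantities rather than their mod-$1440$ reductions. The assumption that every connection in $F$ operates daily is exactly what is needed here, since it guarantees that the outrolled copies of every $Q' \in F$ indeed exist at the required absolute days and preserve both their lengths and their $ndep$-values. Once this is pinned down, the entire argument collapses to the single inequality above plus the three short case checks.
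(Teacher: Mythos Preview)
Your proposal is correct and matches the paper's proof essentially line for line: both take a $Q'$ with $dep(Q')\ge dep(Q)+d+\mathit{transfer}(S_2)$, use $\mathit{length}(Q')\ge \mathit{length}(Q)-d$ to derive $arr(Q')\ge arr(Q)+\mathit{transfer}(S_2)$, and then invoke $ndep(Q')\ge arr(Q')$ in the critical-arrival case. The only minor addition in the paper is the one-line remark that $P$ can indeed link with $Q'$ (because $arr(P)+\mathit{transfer}(S_1)\le dep(Q)\le dep(Q')$), which you leave implicit.
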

\begin{proof}
Let $Q'$ be a connection that does not depart earlier than $dep(Q)+d+\mathit{transfer}(S_2)$.
Since $arr(P) + \mathit{transfer}(S_1) \le dep(Q)$, and $dep(Q) \le dep(Q')$, we can link $P$ with $Q'$.
By definition of $d$ is $\mathit{length}(Q) \le \mathit{length}(Q')+d$ and thus $arr(Q) = dep(Q) + \mathit{length}(Q) \le dep(Q) + \mathit{length}(Q')+ d \le (dep(Q')-d-\mathit{transfer}(S_2)) + \mathit{length}(Q') + d \le arr(Q') - \mathit{transfer}(S_2)$.
When $Q'$ has a critical arrival, then $arr(Q') \le ndep(Q')$ so that $P$ linked with $Q$ will dominate $P$ linked with $Q'$.
\end{proof}

\paragraph{Linking two edges} for shortcuts and profile search is done by doing the dominant range computation at link time.
We change the order of the connections in the array of connections when supporting this operation.
They are still primarily ordered by departure.
But within the same departure time, the dominant connection should be after the dominated one.
That allows for an efficient backward sweep to remove dominated connections.
Namely we secondarily order by length descending, then non-critical before critical departure, then non-critical before critical arrival.
Finally, we order by the first and last stop event, preferring a stop event with critical departure or arrival.
The last order is necessary for an efficient building of a dominant union (minimum) of two connection sets where the preference is on one set.


Given two edges $e_1= (S_1,S_2)$ and $e_2 = (S_2,S_3)$, we want to link all consistent connections to create $\fn{e_3}$ for an an edge $e_3 = (S_1,S_3)$.
A trivial algorithm would link each consistent pair of connections in $\fn{e_1}$ and $\fn{e_2}$ and then compare each of the resulting connections with all other connections to find a dominant set of connections. However, this is impractical for large $g=\abs{\fn{e_1}}$ and $h=\abs{\fn{e_2}}$.
We would create $\Th{g\cdot h}$ new connections and do up to $\Th{(gh)^2}$ comparisons.

So we propose a different strategy for linking that is considerably faster for practical instances.
We process the connections in $\fn{e_1}$ in descending order.
Given a connection $P$, we want to find a connection $Q$ that dominates $P$ at the departure at $S_1$.
So we only need to link $P$ to connections in $\fn{e_2}$ that depart in $S_2$ after the arrival of $P$ but before the arrival of $Q$.
Preferably we want to find the $Q$ with the earliest arrival time.
However, we find the $Q$ with the earliest arrival time in $S_2$ with $dep(Q) \ge dep(P) + \mathit{transfer}(S_2)$.
Then $Q$ will not only dominate $P$ at the departure but also any connection departing not later than $P$.
So we can use a simple finger search to find $Q$.
Now we link $P$ only to connections in $\fn{e_2}$ departing between the arrival of $P$ and $Q$.
We use finger search to find the first connection that departs in $\fn{e_2}$ after the arrival of $P$.
Of course, we need to take the transfer time at $S_2$ into account when we link.
It is not always necessary to link to all connections that depart before $Q$ arrives; we can use the knowledge of the minimum length in $\fn{e_2}$ to stop linking when we cannot expect any new dominant connections.
The newly linked connections may (1) not be dominant and also may (2) not be in order.

(1) To remove dominated connections, we use a sweep buffer that has as state the current departure time and holds all relevant connections with higher order to dominate a connection with the current departure time.
The number of relevant connections is usually small.
We need at most all the connections that depart less than $\mathit{transfer}(S_1)$ later than the current departure time and also all connections that depart at least $\mathit{transfer}(S_1)$ later than the current departure time but their arrival time is not more than $\mathit{transfer}(S_3)$ later than the current earliest arrival time.
Assuming that only few connections depart in $S_1$ within $\mathit{transfer}(S_1)$ minutes, and only few connections arrive in $S_3$ within $\mathit{transfer}(S_3)$ minutes, the sweep buffer has only a few entries.

(2) Connections can only be unordered within a range with same departure time, e.g. when they have ascending durations.
So we use the idea of insertion sort to reposition a connection that is not in order.
While we reposition a new connection, we must check whether it dominates the connections that are now positioned before it.
E.g. a new connection with same departure than the previous one but smaller duration may dominate the previous one if the departure is not critical.

After we processed all connections in $\fn{e_1}$, we have a superset of $\fn{e_3}$ that is already ordered, but some connections may be dominated.
This happens when the dominant connection departs after midnight and the dominated connection before, so the periodic border is between them.
To remove all dominated connections, we continue scanning backwards through the new connections but now on ``day -1'' using the sweep buffer.
We can stop when no connection in the sweep buffer is of ``day 0''.

\subparagraph{Runtime.}
We give an idea why this link operation is very fast and may work in linear time in many cases.
The experiments in Section~\ref{s:experiments} show that it is indeed very efficient.
Let $c_P$ be the size of the range in $\fn{e_3}$ that departs between the arrival of $P$ and $Q$.
Let $b_P$ be the runtime of the finger search to find the erliest connection in $\fn{e_2}$ that departs after the arrival of $P$
Let $s$ be the maximum number of relevant connections in the sweep buffer.
The runtime of link is then $\Oh{\sum_{P \in \fn{e_1}}{(c_Ps + b_P)}}$.
This upper bound reflects the linking and usage of the sweep buffer.
The backward scanning on ``day -1'' is also included, since it just adds a constant factor to the runtime.
The finger search for $Q$ is amortized in $\Oh{1}$, so it is also included in the runtime above.
It is hard to get a feeling for $c_P$ and $b_P$, they can be large when $h=\abs{\fn{e_2}}$ is much larger than $g=\abs{\fn{e_1}}$.
Under the practical assumption that $\sum_{P \in \fn{e_1}}{\left(c_P + b_P\right)} = \Oh{g+h}$, we get a runtime of $\Oh{(g+h)s}$.
As we already argued when we described the sweep buffer, $s$ is small and in many cases constant, so our runtime should be $\Oh{g+h}$ in many cases.

\paragraph{Building the Minimum of two Sets of Connections.}
Query algorithms need two basic operations: link and minimum.
For newly visited nodes only link is relevant, otherwise a minimum follows a link so it is efficient to integrate both.
But first we will describe a standalone minimum operation, we use it compare witness paths and possible shortcuts.
It is basically a backwards merge of the ordered arrays of connections and uses a sweep buffer as for the link operation.
Also like the link operation, we continue backward scanning on ``day -1'' to get rid of dominated connections over the periodic border.

Like for an arrival connection, two connections are \emph{equivalent} when they have the same length, an equivalent departure and equivalent arrival.
Two connections $P$ and $Q$ have an \emph{equivalent departure} when their departure is identical or when the departure is not critical and they have the same departure time.
Analogously, two connections $P$ and $Q$ have an \emph{equivalent arrival} when their arrival is identical or when the arrival is not critical and they have the same arrival time.
Because of the order, equivalent connections are next to each other.
So we can easily detect them during the merge.
Tie breaking is done in a way to reduce the number of priority queue operations.

\subparagraph{Runtime.}
Let $g$ and $h$ be the cardinalities of the two sets we merge.
Let $s$ be the maximum size of the sweep buffer.
Then the runtime of the minimum operation is $\Oh{(g+h)s}$.
Since $s$ is small and in many cases constant, the runtime should be $\Oh{g+h}$ in many cases.

\paragraph{Integrating Link and Minimum.}
A minimum operation always follows a link operation when we relax an edge to an already reached station $S$.
This happens quite often for profile queries, so we can exploit this to tune our algorithm.
It is quite simple, we directly process the newly linked connections one by one and directly merge them with the current connections at $S$.
When a new connection is not in order, we fix this with the insertion sort idea.
The rest is like in the stand-alone minimum operation.
This integration reduces required memory allocations and gives significant speedups.

\paragraph{More Features.}
Our implementations of \emph{link} and \emph{minimum} are tuned for the EAP.
But we believe that we can extend our implementation to more features, e.g. multi-criteria, without loosing too much efficiency.
The idea with the dominant ranges can be generalized to support different dominance relations.
Also, since we already work with sets of connections, we do not have to pay an additional penalty for memory management when we switch to multi-criteria.
In a sense, we already have a multi-criteria optimization problem, we optimize for travel time but relax it with the train and transfer information.
However, for traffic days, the dominant ranges may work not well.
In this case we can e.g. split the connections by day or introduce a dominant linked list of connections, that are relevant from a given earliest departing transfer connection.

}
\end{document}